\newtheorem{proposition}{Proposition}
\title{ Notes on Retroactive Interference Model of Forgetting
}
\author{
  Mikhail Katkov \\
	Department of Brain Sciences\\
	Weizmann Institute of Science\\
	Rehovot, 76100 Israel \\ \\
        School of Natural Sciences\\
	Institute for Advanced Study\\
	Princeton, NJ \\
	\texttt{mikhail.katkov@gmail.com} \\
}
\begin{document}
\maketitle


\begin{abstract}
We present analytical derivation of the minimal and maximal number of items retained in recently introduced Retroactive Interference Model of Forgetting. Also we computed the probability that two items presented at different times are retained in the memory at a later time analytically. 
\end{abstract}

\keywords{retrograde interference \and Markov process \and retention curve}

\section*{Introduction}

Studying human memory is a complex task, since there are presumably many interacting processes that are hard to isolate. Traditionally models in psychology of memory are attempting to describe many processes at once by creating a very complicated mathematical model that have a lot of parameters, hard to analyse, and usually require a separate fitting of parameters for each experiment. Therefore, since parameters have to adjusted for each measurement, it is not clear how good these models would describe memory processes outside of laboratory settings. We have recently proposed a different type of models to describe human memory that are based on few assumptions and have zero or few parameters that describe the class of stimuli, but independent of experimental settings. These types of models, if validated, have a much broader applicability in everyday life settings. We have recently presented mathematical models for memory forgetting and retrieval \cite{katkov2022mathematical_models}. In this publication we have asked about mathematical properties of forgetting model that may help design experiments to validate the underlying mathematical construction. In this note we provide answers to 2 questions raised in that publication regarding the forgetting model. 

Experimentally, forgetting is traditionally measured in the form of retention function ($RC(t)$) - the probability that memory is retained for time $t$ since acquisition\cite{Ebbinghaus1885}. People observe that retention function monotonically decreasing with time and although there are debates on the form of retention function one of the best candidate is power function of time. One of the popular explanation of memory forgetting in humans is retrograde interference \cite{Wixted2004-R}. It assumes that new incoming memories are interacting with stored memories and cause some past memories disappear. There are different approaches to model forgetting (see for example \cite{finotelli2023mathematical}), but here we are concentrating on consequences of our mathematical model \cite{georgiou2021retroactive} for possible experimental validating of underlying assumptions behind our model. The model assumes that each incoming memory (one memory at one discrete time step) has $n$ dimensional valence vector, with components being iid random variables. Every incoming memory is added to the memory pool, erasing all stored memories that have smaller valences in all dimensions. This model can be solved analytically, and the resulting retention curve agree well with experiment. Nevertheless, it is not clear to what extent the underlying principles holds during retention of memory items. We have posed recently several mathematical questions that may provide additional experimental tests related to this issue\cite{katkov2022mathematical_models}. 

We consider the forgetting model III from \cite{katkov2022mathematical_models}. It states that there is a retention process, where at each time step a new memory is presented to the process. Memory in the model is characterized by a vector of valences $v_k \in \mathbb{R}, k=1..n$, where $n$ is a single integer parameter of the model representing the dimensionality of the model. Valences of incoming memory is assumed to be sampled from arbitrary stationary distribution (absolutely continuous). Incoming memory erases all currently stored memories that have all valences smaller that corresponding valencies of incoming memory. Finally, incoming memory is added to stored memories. Formally, all incoming memories are described by collection of valences $V = \{ v_{k,t} \in \mathbb{R}, k=1..n, t \in \mathbb{N} \}$. For each time $T$ we can define a set of stored memories $M(T; V) = \{t_1, ... t_{|M(T)|}\}, t_k< T$ which contains presentation times ($t_1, ... t_{|M(T)|}$) of stored memories, where the cardinality of $M(T; V)$ ($|M(T; V)|$) represents a number of stored memories at time $T$. At time $T+1$ a new memory with valences $v_{k,T+1}$ is presented, and the set of stored memories is updated $M(T+1; V)= \{T+1; t_m : t_m \in M(T; V), \exists k | v_{k, t_m} > v_{k, T+1} \}$. 
One can ask what is expected value of $|M(T)|$ that is referred to as retention curve $RC_n(T) = E( |M(T; V)| )$ with respect to distribution of $v_{k,t}$. It turns out that retention curve does not depend on the particular distribution of valences, since it depends only on the order statistics of memory items in each dimension\cite{georgiou2021retroactive, katkov2022mathematical_models}. 

\section*{Minimal number of retained items}

We proposed two kind of tests that can potentially check the validity of the model \cite{katkov2022mathematical_models}. Both are related to partialy ordered set (poset) nature of the model. For instance, one can define a partial order relationship between memories by the erasure operation - a memory $A= v_{k,A}$ is larger than $B= v_{k,B}$ if valences in all dimensions are larger $A \succ B \equiv v_{k, A} > v_{k, B} \forall k$. In terms of recall process, it means that larger memory $A$ will erase memory $B$ if it is presented later in time. Due to linear extension principle \cite{mathias1974order_extension}, one can permute presentation order such that all memories will be retained. For instance, permute items such that valences are sorted along dimension $k$ in decreasing order. Consider an item at time $t$, since $v_{k,t_1} < v_{k,t}, \forall t_1 > t $ this item will not be erased. Since this construction is valid for all $t$ all items will be kept in memory in that order. We have asked what would be the minimum number of retained memories across all permutations of presentation order for a given memory realization (a set of $v_{k,t}$). This question is related to the number of maximal elements in the poset. The element of poset is maximal, if there is no element that is greater. In our settings that means that maximal element cannot be erased at any presentation position and the number of retained items cannot be smaller that the number of maximal elements. The first question is related to the distribution of maximal elements for a given poset.

\begin{proposition}
Let $P(|M_n(T; V_{\sigma(T, V)} )| = k )$ denote the probability of retention process $M_n$ of dimension $n$ at time $T$ with presentation time permuted according to $\sigma(T, V)$ has cardinality $k$. Than 

(1) The maximal number of retained items across presentation order permutations is always $T$ for any $V$.

(2) The distribution of minimal number of retained memories $P(|M_n(T; V_{\sigma(T, V)} )| = k ) = P(|M_{n-1}(T; V)| = k )$, where $\sigma(T, V): |M_n(T; V_{\sigma(T, V)} )| \leq  |M_n(T; V_{\sigma'(T, V)} )| \forall \sigma(T, V) $; in other words the distribution of minimal number of retained items in the model with dimension $n$ is the same as distribution of retained items in random presentation order for model with dimension $n-1$. Note that averaging is done over all $V$, whereas for each $V$ there is possibly different permutation $\sigma(T, V)$.
\end{proposition}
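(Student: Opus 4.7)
Part (1) reuses the linear-extension device already mentioned in the text. I would fix any coordinate $k$ and let $\sigma$ present items in strictly decreasing order of $v_{k,t}$. Every item presented after time $t$ then has a smaller $k$-th valence than item $t$, so it cannot dominate item $t$ in all $n$ coordinates; no erasure ever occurs, and $|M_n(T; V_\sigma)| = T$. Since $|M_n|\le T$ trivially, this is the maximum, establishing (1).

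For Part (2) the plan is a two-step reduction via the number of maximal elements of the induced poset. First I would show that for every $V$, $\min_\sigma |M_n(T; V_\sigma)|$ equals the number of maximal elements of $(\{1,\dots,T\},\succ)$. The lower bound is the observation already noted in the text: a maximal element cannot be dominated and is therefore never erased in any presentation order. For the matching upper bound I would take $\sigma$ to be a linear extension of $\succ$, so that $a \prec b$ implies $a$ is presented before $b$. For every non-maximal $a$ there exists a maximal $b$ with $a \prec b$; since $b$ is presented after $a$ and dominates it in all coordinates, $a$ is erased, either directly by $b$ or by some earlier intermediate dominator. Under this $\sigma$ the surviving items are therefore exactly the maximal elements.

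Second, I would show that the number of maximal elements $X_n$ of the $n$-dimensional poset has the same distribution as $|M_{n-1}(T; V)|$. Relabel the $T$ items by their rank in the $n$-th coordinate, so that relabeled item $s$ carries the $s$-th smallest $v_{n,\cdot}$. Item $s$ is maximal iff no other item dominates it; but items with index $s' < s$ fail in the $n$-th coordinate and cannot dominate, so $s$ is maximal iff no item $s' > s$ dominates $s$ in coordinates $1,\dots,n-1$. This is precisely the condition for $s$ to be retained at time $T$ under the $(n-1)$-dimensional process applied to the first $n-1$ coordinates of the relabeled items, presented in the relabeled order. Because the $n$-th coordinate is iid continuous and independent of the other coordinates (across both $k$ and $t$), the induced relabeling is a uniformly random permutation independent of coordinates $1,\dots,n-1$, and after relabeling these coordinates remain iid with the original distribution. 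Combined with the distribution-invariance of the retention curve already noted in the introduction (it depends only on order statistics), this yields $P(X_n = k) = P(|M_{n-1}(T;V)| = k)$, which together with the first step gives the desired identity for the minimum.

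I expect the main obstacle to be the upper bound in the first step, namely verifying rigorously that a linear extension of $\succ$ really does erase every non-maximal element (one must check that the erasure survives any intermediate rewriting that may occur along the way). The remaining distributional step is then a clean independence argument resting on the iid continuous nature of the valences and the order-statistic invariance of the $(n-1)$-dimensional retention curve.
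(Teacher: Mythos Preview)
Your proposal is correct and follows essentially the same route as the paper: for (1) both of you sort descending along one coordinate, and for (2) the paper's minimizing permutation is precisely the ascending sort along one coordinate---a particular linear extension of $\succ$---after which the surviving items are identified with the maximal elements and then matched, via the rank in that coordinate, to the survivors of the $(n-1)$-dimensional process in random order. Your worry about the upper bound is unfounded: once an item is stored it can only be erased, so if a non-maximal $a$ has any dominator $b\succ a$ presented later, $a$ is certainly absent at time $T$ regardless of what happens in between.
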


\begin{proof}

(1) Sort memories w.l.o.g. (without loss of generality) across dimension 1 in descending order. Then $v_{i,t_2} < v_{i, t_1}, t_1<t_2$ and memory presented at $t_1$ cannot be erased, because valence in all later items is smaller in at least one dimension.

(2) Since the erasure operation is defined with respect to order statistics we need to show that distribution of cardinalities of permuted order statistics corresponds to the distribution of cardinalities for unpermuted memory process of dimensionality one smaller. 

Step 1. Permutation leading to minimal number of retained items.

Sort memories w.l.o.g. (without loss of generality) across dimension 1 in ascending order.
We claim this presentation order is $\sigma(T, V)$. For instance, take any $t \in M_n(T; V_{\sigma(T,V)})$. Any memories with $t'<t$ cannot erase memory $t$ if inserted after memory $t$, since valence in the first dimension is smaller. Memories with $t'>t$ cannot erase memory $t$, because they were presented at a later time in sorted order and did not erase memory $t$. Therefore, each memory $t \in M_n(T; \sigma(T, V))$ will not be erased at any position, i.e. will be present in any temporal presentation. 

Step 2. Permutations in last dimensions

Consider $\mathbb{V}$ a set of all permutations of $V$ along last dimension. For any $V$ the cardinality of $|\mathbb{V}| = T!$ is the same for each V. 
$|M_{n-1}(T; V)| = |M_{n-1}(T; V')| \forall V \in \mathbb{V}$, and there is only one permutation from Step 1 that guarantee non-erasure in last dimension. Therefore, $P(|M_n(T; V_{\sigma'(T, V)} )| = k ) = P(|M_{n-1}(T; V)| = k )$ 
\end{proof}

\section*{Pairwise retention correlations}

Second type of tests for the model comes form the observation in numerical simulations that the correlations between retained items that acquired at different presentation time is non zero for dimensions greater than 1. Here we present analytical derivation of such correlations. Note, that such correlations arise purely from poset nature of the model. It is instructive, first, to derive retention function in slightly different way that was done in \cite{georgiou2021retroactive, katkov2022mathematical_models}. It is essentially the same calculation as in Eq(22) in \cite{katkov2022mathematical_models} written in a different form.

One can describe the forgetting process consisting of 2 Markov processes. One is related to potential erasure of the previous item due to single dimension, and the second one is related to item survival over time step due to first Markov process. In more details, there 2 states in the Markov process over time: `0' - item is erased and `1' item is still present in memory. There are also 2 states in the Markov process over dimensions: `0' - item is potentially erased by following dimension and `1' - item survived time step. For instance, suppose that memory item acquired at time $t'$ survived until time $t$, if $v_{k,t} < v_{k,t'}$ the item will survive independently of valencies relationship in other dimensions. The second Markov process has exactly $n$ steps and the state after the last step determines whether memory survived this time step or not. For example, the state `0' indicates that all valencies of original item are smaller than currently acquired one, and item is erased at this time step if it was in memory in the previous step. 

\subsection*{Retention curve derivation}
Since retention function does not depend on the distribution of valences we can assume that the distribution is uniform between 0 and 1. We first consider what is the probability for the memory with valence $u$ to be retained after $T$ time steps and then average this probability across all realization of $u$

Let
\begin{align}
    P(u; t)= \left( \begin{array}{c}
         P_0 (u; t)  \\
         P_1 (u; t) 
    \end{array} \right)
    \begin{array}{l}
         \mbox{probability the memory is erased after step t}   \\
         \mbox{probability the memory is retained after step t}
    \end{array} , 
\end{align}
where $u$ is n-dim valence of memory. Immediately after presentation the memory is retained, therefore $P(u;0)= (0,1)$. In order to compute this probability vector we need to analyze what happen during a single time step. 

Consider the case when memory is retained at the beginning of a time step. It appears that the process of forgetting in a single time step can be described as Markov process. For instance, the memory can be retained at a time step due to arbitrary single dimension $d$ where the valence of newly presented item is smaller than $u_d$. Therefore, to describe Markov process we need to keep information about probability of the memory retained or potentially erased after each dimension.
Let 
\begin{align}
 \mathcal{P}(u, d) =\left( \begin{array}{c}
        \mathcal{P}_0 (u; d)  \\
         \mathcal{P}_1 (u; d) 
    \end{array} \right)
    \begin{array}{l}
         \mbox{probability memory is potentially erased after dimension d}   \\
         \mbox{probability memory is retained after dimension d }
    \end{array} .
\end{align}

To write the transition matrix due to one dimension one can note that if memory is retained then item is remains retained. However, if if memory is potentially erased it can be retained when valence of newly incoming memory is smaller than $u_d$. The probability of this event is $u_d$. Therefore, we can write transition to the next dimension as
\begin{align}
    \mathcal{P}(u; d)=& \mathcal{E}(u; d) \mathcal{P}(u;d-1) \\
    \mathcal{E}(u; d) =& \left( \begin{array}{cc}
        1-u_d & 0 \\
        u_d & 1
    \end{array}\right), 
\end{align}
where initially memory is potentially erased $\mathcal{P}(u;0)= (1,0)$, and the probability that memory is retained after dimension $n$ is $\mathcal{P}(u;n)$.

One can observe that 
\begin{align}
    \mathcal{E}(u; d)=& \left( \begin{array}{cc}
        0 & -1 \\
        1 & 1
    \end{array}\right) \left( \begin{array}{cc}
        1 & 0 \\
        0 & 1-u_d
    \end{array}\right) \left( \begin{array}{cc}
        1 & 1 \\
        -1 & 0
    \end{array}\right), 
\end{align}
and therefore the probability that memory survives the time step if it was in the memory is
\begin{align}
    \mathcal{P}(u; n)=& \mathcal{E}(u; n) \mathcal{E}(u; n-1) .. \mathcal{E}(u; 1) \left( \begin{array}{c} 1 \\ 0 
    \end{array}\right) \\
    = &  \left( \begin{array}{cc}
        0 & -1 \\
        1 & 1
    \end{array}\right) \left( \begin{array}{cc}
        1 & 0 \\
        0 & \prod \limits_{d=1}^n 1-u_d
    \end{array}\right) \left( \begin{array}{cc}
        1 & 1 \\
        -1 & 0
    \end{array}\right) \left( \begin{array}{c} 1 \\ 0 
    \end{array}\right) \\
    = & \left( \begin{array}{c}  \prod \limits_{d=1}^n 1-u_d \\ 1- \prod \limits_{d=1}^n 1-u_d
    \end{array}\right).
\end{align}

Returning to the first Markov process, one can observe that if the memory was erased before current time step it stays erased and that if the memory was retained before current step it will be forgotten with probability $\prod \limits_{d=1}^n 1-u_d $. Therefore, forgetting process can be written as a Markov process
\begin{align}
    P(u; t+1)=& Er(u) P(u;t) \\
    Er(u) = & \left[ \begin{array}{cc} 
        \left( \begin{array}{c} 1 \\ 0 \end{array}\right) &  
         \mathcal{P}(u; n) 
    \end{array} \right] =
     \left( \begin{array}{cc}
        1 & \prod \limits_{d=1}^n 1-u_d  \\
        0 & 1-\prod \limits_{d=1}^n 1-u_d 
    \end{array}\right), 
\end{align}
where $Er$ is a transition matrix after one time step.

Now we can compute retention process in time
\begin{align}
    P(u; T)=& Er(u) P(u;T-1) \\
    =& Er(u)^T P(u; 0) \\
    =& \left( \begin{array}{cc}
        1 & \prod \limits_{d=1}^n 1-u_d  \\
        0 & 1-\prod \limits_{d=1}^n 1-u_d 
    \end{array}\right)^T \left( \begin{array}{c}  0 \\ 1 \end{array}\right) \\
    = & \left( \begin{array}{cc}
        1 & -1 \\
        0 & 1
    \end{array}\right) \left( \begin{array}{cc}
        1 & 0 \\
        0 & 1- \prod \limits_{d=1}^n 1-u_d
    \end{array}\right)^T \left( \begin{array}{cc}
        1 & 1 \\
        0 & 1
    \end{array}\right) \left( \begin{array}{c}  0 \\ 1 \end{array}\right) \\
    = & \left( \begin{array}{cc}
        1 & -1 \\
        0 & 1
    \end{array}\right) \left( \begin{array}{cc}
        1 & 0 \\
        0 & \left( 1- \prod \limits_{d=1}^n 1-u_d\right)^T 
    \end{array}\right) \left( \begin{array}{cc}
        1 & 1 \\
        0 & 1
    \end{array}\right) \left( \begin{array}{c}  0 \\ 1 \end{array}\right)\\
    = & \left( \begin{array}{c}
        1-\left( 1- \prod \limits_{d=1}^n 1-u_d\right)^T  \\
        \left( 1- \prod \limits_{d=1}^n 1-u_d\right)^T 
    \end{array}\right).
\end{align}
From here 
\begin{align}
    \mbox{RC}_n(T)=& \int \limits_{0}^1 d u_1 ... \int \limits_{0}^1 d u_n  \left(1- \prod \limits_{d=1}^n (1-u_d) \right)^T\\
    =& \int \limits_{0}^1 d u_1 ... \int \limits_{0}^1 d u_n \sum \limits_{m=0}^{T} (-1)^m \binom{T}{m} \prod \limits_{d=1}^n (1-u_d)^m \\
    =& \sum \limits_{m=0}^{T} \binom{T}{m} \frac{(-1)^m}{(m+1)^n}
\end{align}

\subsection*{Retention curve for 2 memories}

We start the process with only one memory with valencies $n$-dimensional $v$. After $t_1$ time steps another memory is presented with valence $w$. After $t$ time steps from this moment we can represent the state of 2 memories as 
\begin{align}
 P(t; u, w)=  \left( \begin{array}{c}
        P_{00} (t; u, w)  \\
        P_{01} (t; u, w) \\
        P_{10} (t; u, w) \\
        P_{11} (t; u, w) 
    \end{array} \right)
    \begin{array}{l}
         \mbox{probability both memories are forgotten}   \\
         \mbox{probability second memory is retained  }   \\
         \mbox{probability first is retained }   \\
         \mbox{probability that both memory are retained after t time steps  }
    \end{array},
\end{align}
We will consider forgetting process as Markov process $P(t+1; u, w)= T_t(u,w) P(t; u, w) $, since each step is independent, and the state of the system depends only on previous step. $T_t(u,w)$ is the transition matrix for single time step that depends on valences of both memories.

At $t=0$ second memory is acquired and since first memory maybe erased after $t_1$ time steps there 2 non-zero values $P_{01}$ and $P_{11}$ with probabilities computed in previous section and due to the presentation of second memory
\begin{align}
 P(0; u, w)=  \left( \begin{array}{c}
         0 \\
        1- \left( 1-  \prod \limits_{i=1}^n  (1- u_i) \right)^{t_1} \left( 1-  \prod \limits_{i=1}^n \Theta( w_i-u_i) \right) \\
         0  \\
        \left( 1-  \prod \limits_{i=1}^n  (1- u_i) \right)^{t_1} \left( 1-  \prod \limits_{i=1}^n \Theta( w_i-u_i) \right)
    \end{array} \right),
\end{align}
where $\Theta( x ) $ is Heaviside theta function.

The transition matrix is 
\begin{align}
 T_t(u,w)=  \left( \begin{array}{cccc}
        1 &  \prod \limits_{i=1}^n  (1- w_i) & \prod \limits_{i=1}^n  (1- v_i)  &  \mathcal{P}_{00} (u, w; n ) \\
        0 &  1-  \prod \limits_{i=1}^n  (1- w_i)  & 0 & \mathcal{P}_{01} (u, w; n)  \\
        0 & 0 & 1- \prod \limits_{i=1}^n  (1- v_i)  &  \mathcal{P}_{10} (u, w; n)  \\
        0 & 0 & 0 & \mathcal{P}_{11} (u, w; n) 
    \end{array} \right),
    \label{eq:T_t_start}
\end{align}
where 
\begin{align}
 \mathcal{P}(u, w; d)=  \left( \begin{array}{c}
        \mathcal{P}_{00} (u, w; d)  \\
         \mathcal{P}_{01} (u, w; d) \\
         \mathcal{P}_{10} (u, w; d) \\
         \mathcal{P}_{11} (u, w; d) 
    \end{array} \right)
    \begin{array}{l}
         \mbox{probability both memory are potentially erased after dimension d}   \\
         \mbox{probability second memory is retained after dimension d }   \\
         \mbox{probability first is retained after dimension d }   \\
         \mbox{probability both memories are retained after dimension d }
    \end{array}.
\end{align}
The first column represents the fact that erased memory stays erased, the second and third columns deals with the processes where only one memory is retained and the last column represent forgetting step when two memories are still present. 

To compute $\mathcal{P}(u, w; d)$ we can consider another Markov process, where the initial state is `both memory are present' and potentially erased by new stimulus. In each dimension the valence of a new stimulus is either smaller than one or both items in which case corresponding memory/ies will survive this time step, or greater, in which case corresponding memory can potentially survive due to next dimension. Therefore, $\mathcal{P}_{00} (u, w; n)$ represent probability that both memories are erased in a single step.

This Markov process can be described as follows
\begin{align}
    \mathcal{P}(u, w; d)=& \mathcal{E}(u, w; d) \mathcal{P}(u, w;d-1) \\
    \mathcal{E}(u, w; d) =& \left( \begin{array}{cccc}
         1- \max(v_d, w_d) & 0 & 0 & 0\\
         (w_d-v_d) \Theta( w_d-v_d) & 1 - v_d  & 0 & 0\\
         (v_d-w_d) \Theta( v_d-w_d) & 0  &  1-w_d & 0\\
        \min(v_d, w_d) & v_d & w_d & 1
    \end{array}\right),\\
    \mathcal{P}(u, w; 0)=  \left( \begin{array}{c}
         1  \\
         0 \\
         0 \\
         0
    \end{array} \right),
\end{align}

The last column states that item that survived previous dimensions will survive this dimension. Previous 2 columns describe the processes when one memory is survived, but second one can potentially be erased, and the first column describe survival process when both memories can potentially be erased.

To simplify analysis we can consider 2 cases: when the first memory has larger valence than the second memory in dimension $d$ $w_d<v_d$ and the opposite case.

\paragraph{First memory has larger valence}
In this case 
\begin{align}
    \mathcal{E}(u, w; d | u_d>w_d ) =& \left( \begin{array}{cccc}
         1-u_d & 0 & 0 & 0\\
         0 & 1 - u_d  & 0 & 0\\
         u_d-w_d & 0  &  1-w_d & 0\\
        v_d & u_d & w_d & 1
    \end{array}\right),\\
\end{align}

Eigenvalues are $1, 1-u_d, 1-u_d, 1-w_d$, with corresponding eigenvectors (as column vectors)
\begin{align}
  U_1 = \begin{pmatrix}
    0 & 0 & -1 & 0 \\ 0 & -1 & 0 & 0 \\ 0 &0 &1&-1\\ 1& 1& 0& 1
    \end{pmatrix}
\end{align}
\paragraph{Second memory has larger valence}
In this case 
\begin{align}
  \mathcal{E}(u, w; d | u_d<w_d ) =& \left( \begin{array}{cccc}
         1-v_d & 0 & 0 & 0\\
         w_d-u_d & 1 - u_d  & 0 & 0\\
         0 & 0  &  1-w_d & 0\\
        w_d & u_d & w_d & 1
    \end{array}\right),\\
\end{align}
Eigenvalues are $1, 1-u_d, 1-w_d, 1-w_d$, with corresponding eigenvectors (as column vectors)
\begin{align}
   U_2 =  \begin{pmatrix}
    0 & 0 & -1 & 0 \\ 0 & -1 & 0 & 0 \\ 0 &0 &1&-1\\ 1& 1& 0& 1
    \end{pmatrix}
\end{align}

It is clear that effect of dimension can be tested in any order, the result would be the same. It can also be checked by direct calculation that $\mathcal{E}(u, w; d_1 | u_{d_1}>w_{d_1} ) \mathcal{E}(u, w; d_2 | u_{d_2}<w_{d_2} ) = \mathcal{E}(u, w; d_2 | u_{d_2}<w_{d_2} ) \mathcal{E}(u, w; d_1 | u_{d_1}>w_{d_1} )$.
Therefore we can apply $\mathcal{E}(u, w; d | u_{d}>w_{d} )$ for all dimensions where $u_{d}>w_{d}$ and then apply  $\mathcal{E}(u, w; d | u_{d}<w_{d} )$ for all dimensions where $u_{d}<w_{d}$

\begin{align}
    \mathcal{P}(u, w; n) = & U_2 \begin{pmatrix} 1 & 0& 0& 0\\ 0& \prod\limits_{d|w_d>u_d} (1-u_d) & 0&0\\
    0&0& \prod\limits_{d|w_d>u_d} (1-w_d)& 0\\ 0&0&0&\prod\limits_{d|w_d>u_d} (1-w_d) \end{pmatrix} U_2^{-1} \times \nonumber \\ & U_1 
     \begin{pmatrix} 1 & 0& 0& 0\\ 0& \prod\limits_{d|w_d<u_d} (1-u_d) & 0&0\\
    0&0& \prod\limits_{d|w_d<u_d} (1-u_d)& 0\\ 0&0&0&\prod\limits_{d|w_d<u_d} (1-w_d) \end{pmatrix} U_1^{-1} 
    \begin{pmatrix} 1 \\ 0 \\ 0 \\ 0    \end{pmatrix}
    \label{eq:prob_over_dim}
\end{align}
Let 
\begin{align}
    F(x) = \prod\limits_{d|w_d<u_d} (1-x_d),\\
    G(x)=  \prod\limits_{d|w_d>u_d} (1-x_d),
\end{align}
than \eqref{eq:prob_over_dim} can be written as
\begin{align}
    \mathcal{P}(u, w; n) = & U_2 \begin{pmatrix} 1 & 0& 0& 0\\ 0& G(u) & 0&0\\
    0&0& G(w)& 0\\ 0&0&0& G(w) \end{pmatrix} U_2^{-1} \times \nonumber \\ & U_1 
     \begin{pmatrix} 1 & 0& 0& 0\\ 0& F(u) & 0&0\\
    0&0& F(u)& 0\\ 0&0&0&F(w) \end{pmatrix} U_1^{-1} 
    \begin{pmatrix} 1 \\ 0 \\ 0 \\ 0    \end{pmatrix}
\end{align}
\begin{align}
    U_1^{-1} \begin{pmatrix} 1 \\ 0 \\ 0 \\ 0    \end{pmatrix}= \begin{pmatrix} 1 \\ 0 \\ -1 \\ -1    \end{pmatrix} 
\end{align}  
\begin{align}
U_1 \begin{pmatrix} 1 & 0& 0& 0\\ 0& F(u) & 0&0 \\
    0&0& F(u)& 0\\ 0&0&0& F(w) \end{pmatrix} 
    \begin{pmatrix} 1 \\ 0 \\ -1 \\ -1    \end{pmatrix} = \begin{pmatrix} F(u) \\0 \\ F(w) -F(u) \\ F(w)  \end{pmatrix}
\end{align}  
\begin{align}
    \mathcal{P}(u, w; n) = \begin{pmatrix}
    F(u)G(w) \\ F(u)(G(u)-G(w)) \\ G(w)(F(w)-F(u)) \\ 1- F(w)G(w) + F(u)G(w)- F(u)G(u)
    \end{pmatrix}
\end{align}

\subsection*{Simplifying Markov process over time}

Transition matrix \eqref{eq:T_t_start} now can be rewritten explicitly (note that)
\begin{align}
    \prod \limits_{i=1}^n  (1- w_i) = F(w)G(w) \\
    \prod \limits_{i=1}^n  (1- u_i) = F(u)G(u) \\
\end{align}

\begin{align}
 T_t(u,w)=  \left( \begin{array}{cccc}
        1 &  F(w)G(w) & F(u)G(u)  &  F(u)G(w) \\
        0 &  1-  F(w)G(w)  & 0 & F(u)(G(u)-G(w))  \\
        0 & 0 & 1- F(u)G(u)  &  G(w)(F(w)-F(u)) \\
        0 & 0 & 0 & 1- F(w)G(w) + F(u)G(w)- F(u)G(u)
    \end{array} \right),
\end{align}

Let further simplify notation $H_{xy} = F(x)G(y)$
\begin{align}
 T_t(u,w)=  \left( \begin{array}{cccc}
        1 &  H_{ww} & H_{uu}  &  H_{uw} \\
        0 &  1-  H_{ww}  & 0 & H_{uu} - H_{uw}  \\
        0 & 0 & 1- H_{uu}  &  H_{ww}-H_{uw}  \\
        0 & 0 & 0 & 1- H_{ww} + H_{uw}- H_{uu} 
    \end{array} \right),
\end{align}

It can be represented as 
\begin{align}
 T_t(u,w)= & U_t \left( \begin{array}{cccc}
        1 & 0 & 0  &  0 \\
        0 &  1-  H_{ww}  & 0 & 0  \\
        0 & 0 & 1- H_{uu}  & 0  \\
        0 & 0 & 0 & 1- H_{ww} + H_{uw}- H_{uu} 
    \end{array} \right) U_t^{-1},\\
    U_t= &\begin{pmatrix} 1&-1&-1&1\\0&0&1&-1\\0&1&0&-1\\0&0&0&1 \end{pmatrix}.
\end{align}

Therefore,
\begin{align}
    P(t, t_1; u, w)=& T_t(u,w) P(t-1, t_1; u, w) = T_t(u,w)^t P(0, t_1; u, w) \\
    =&  U_t \left( \begin{array}{cccc}
        1 & 0 & 0  &  0 \\
        0 &  (1-  H_{ww})^t  & 0 & 0  \\
        0 & 0 & (1- H_{uu})^t  & 0  \\
        0 & 0 & 0 & (1- H_{ww} + H_{uw}- H_{uu} )^t
    \end{array} \right) U_t^{-1} 
    \begin{pmatrix}
     0 \\  1- \left( 1-  Huu \right)^{t_1}  \\
         0  \\    \left( 1-  H_uu \right)^{t_1}
    \end{pmatrix} = \\
    =& U_t \begin{pmatrix}1\\ (1-Huu)^{t_1} \\ (1-Hww)^t\\ (1-Huu)^{t+{t_1}} (1- H_{ww} + H_{uw}- H_{uu} )^t  \end{pmatrix}
\end{align}

\subsection*{Probability of events}
\begin{align}
    P(t, t_1)=& \int \limits_{u, w \in [0,1]^n } du dw P(t, t_1; u, w) \\
    =&  U_t \begin{pmatrix}1\\ \int  du dw (1-Huu)^{t+{t_1}} \\ \int  du dw (1-Hww)^t\\ \int  du dw (1-Huu)^{t_1} (1- H_{ww} + H_{uw}- H_{uu} )^t  \end{pmatrix}\\
    =& U_t \begin{pmatrix}1\\ \sum \limits_{m=0}^{t} \binom{t+t_1}{m} \frac{(-1)^m}{(m+1)^n} \\
    \sum \limits_{m=0}^{t} \binom{t}{m} \frac{(-1)^m}{(m+1)^n}\\
    \int  du dw (1-Huu)^{t_1} (1- H_{ww} + H_{uw}- H_{uu} )^t  \end{pmatrix}
\end{align}
The last integral requires closer attention.
\begin{align}
   P_{11}(t, t_1) = &\int  du dw (1-Huu)^{t_1} (1- H_{ww} + H_{uw}- H_{uu} )^t 
   \label{eq:integral_form}
\end{align}
expanding both brackets into finite series, and then collecting terms
\begin{align}
   = \int  du dw  
    & \left( \sum \limits_{m_1=0}^{t_1} \binom{t_1}{m_1} (-1)^{m_1} Huu^{m_1}\right)
     \left( \sum \limits_{m_2+m_3+m_4+m_5=t} \frac{t!}{m_2!m_3!m_4!m_5!} (-H_{ww})^{m_3} H_{uw}^{m_4} (-H_{uu})^{m_5}\right) \\
     =&  \sum \limits_{m_2+m_3+m_4+m_5=t} \sum \limits_{m_1=0}^{t_1} 
       \binom{t_1}{m_1} \frac{t!}{m_2!m_3!m_4!m_5!} (-1) ^ {m_3+m1+m_5} 
       \int  du dw H_{ww}^{m_3} H_{uw}^{m_4} H_{uu}^{m_1+m_5} .
\end{align}
The integral inside the sum is a product of integrals in each dimension, therefore we can compute integral for each dimension separately. This integral can be computed as a sum of two integrals: one evaluated for triangle where $u_d>w_d$ and another $u_d<w_d$.

{\bf Case 1: $u_d>w_d$}
\begin{align}
    a_1(m)=&\int  \limits_0^1 dw_d (1-w_d)^{m_3} \int  \limits_{w_d} ^1 d u_d   (1-u_d)^{m_1+m_5+m_4} \\
   = & -\frac{1}{(m_1+m_5+m_4+1)} \int  \limits_0^1 dw_d (1-w_d)^{m_3} (1-w_d)^{m_1+m_5+m_4+1} \\
   = &  \frac{1}{(m_1+m_5+m_4+1)}\frac{1}{m_3+m_1+m_5+m_4+2}
\end{align}

{\bf Case 2: $u_d<w_d$}
\begin{align}
    a_2(m)=& \int  \limits_0^1 du_d (1-u_d)^{m1+m_5} \int \limits_{u_d}^1 dw_d  (1-w_d)^{m_3+m_4}  = \\
    =& -\frac{1}{m_3+m_4+1} \int  \limits_0^1 du_d (1-u_d)^{m_1+m_4} (1-u_d)^{m_3+m_5+1} =\\
    = & \frac{1}{m_3+m_4+1} \frac{1}{m_3+m_1+m_5+m_4+2}
\end{align}

Finally, note that in the case when all $w_d > u_d$ the first memory is erased immediately after presentation of second item and corresponding term $a_2^n(m)$ need to be subtracted from computed integral)
\begin{align}
    P_{11}(t, t_1) = &  \sum \limits_{m_2+m_3+m_4+m_5=t} \sum \limits_{m_1=0}^{t_1} 
       \binom{t_1}{m_1} \frac{t!}{m_2!m_3!m_4!m_5!} (-1) ^ {m_3+m1+m_5} \left[ \left( a_1(m)+a_2(m)  \right)^n - a_2(m)^n \right], \nonumber \\
    a_1(m)=&  \frac{1}{(m_1+m_5+m_4+1)}\frac{1}{m_3+m_1+m_5+m_4+2}, \\
    a_2(m)= &\frac{1}{m_3+m_4+1} \frac{1}{m_3+m_1+m_5+m_4+2}. \nonumber
\end{align}

similar correction should be added to all $P_{xy}(t, t_1)$.

We showed a construction for computing probabilities for 2 items retaining in the memory. Similarly, interaction of more memories can be considered. The number of states grows exponentially with the number of memories ($2^M$) and more terms will appear in the integral similar to Eq.~\ref{eq:integral_form}, and the number of cases in computing an integral will grow as $M!$, therefore a computer program can be used to compute probabilities for larger number of memories.

\bibliographystyle{unsrt}  
\bibliography{references}

\begin{thebibliography}{1}

\bibitem{katkov2022mathematical_models}
Mikhail Katkov, Michelangelo Naim, Antonios Georgiou, and Misha Tsodyks.
\newblock Mathematical models of human memory.
\newblock {\em Journal of Mathematical Physics}, 63(7):073303, 2022.

\bibitem{Ebbinghaus1885}
Hermann Ebbinghaus.
\newblock {\em Memory: A contribution to experimental psychology.}
\newblock Dover, New York, 1964.

\bibitem{Wixted2004-R}
John~T Wixted.
\newblock {The psychology and neuroscience of forgetting}.
\newblock {\em Annu. Rev. Psychol}, 55:235--69, 2004.

\bibitem{finotelli2023mathematical}
Paolo Finotelli and Francis Eustache.
\newblock Mathematical modeling of human memory.
\newblock {\em Front. Psychol., Sec. Cognitive Science}, 14, 2023.

\bibitem{georgiou2021retroactive}
Antonios Georgiou, Mikhail Katkov, and Misha Tsodyks.
\newblock Retroactive interference model of forgetting.
\newblock {\em The Journal of Mathematical Neuroscience}, 11(1):1--15, 2021.

\bibitem{mathias1974order_extension}
Adrian Mathias.
\newblock The order extension principle.
\newblock {\em Proceedings of Symposia in Pure Mathematics}, 13(2):179--183,
  1974.

\end{thebibliography}

\end{document}